\newtheorem{theorem}{Theorem}
\newtheorem*{theoremNoNum}{Theorem}
\newtheorem{lemma}{Lemma}
\title{Shift-Aware Gaussian-Supremum Validation for Wasserstein-DRO CVaR Portfolios}
\author{%
  Derek Long\\
  Department of Industrial Engineering and Operations Research\\
  Columbia University\\
  New York, NY 10027\\
  \texttt{d.long@columbia.edu} \\
}
\newcommand{\R}{\mathbb{R}}
\newcommand{\E}{\mathbb{E}}
\begin{document}

\maketitle

\begin{abstract}
We study portfolio selection with a Conditional Value-at-Risk (CVaR) constraint under distribution shift and serial dependence. While Wasserstein distributionally robust optimization (DRO) offers tractable protection via an ambiguity ball around empirical data, choosing the ball radius is delicate: large radii are conservative, small radii risk violation under regime change. We propose a shift-aware Gaussian-supremum (GS) validation framework for Wasserstein-DRO CVaR portfolios, building on the work by Lam and Qian (2019). Phase~I of the framework generates a candidate path by solving the exact reformulation of the robust CVaR constraint over a grid of Wasserstein radii. Phase~II of the framework learns a target deployment law $Q$ by density-ratio reweighting of a time-ordered validation fold, computes weighted CVaR estimates, and calibrates a simultaneous upper confidence band via a block multiplier bootstrap to account for dependence. We select the least conservative feasible portfolio (or abstain if the effective sample size collapses). Theoretically, we extend the normalized GS validator to non-i.i.d.\ financial data: under weak dependence and regularity of the weighted scores, any portfolio passing our validator satisfies the CVaR limit under $Q$ with probability at least $1-\beta$; the Wasserstein term contributes a deterministic margin $(\delta/\alpha)\|x\|_*$. Empirical results indicate improved return--risk trade-offs versus the naive baseline.
\end{abstract}

\section{Introduction}
Portfolio optimization under uncertainty requires robust risk management, especially for extreme losses. A common risk metric is the Conditional Value-at-Risk (CVaR), which measures the average loss in the worst $\alpha$ fraction of cases. CVaR constraints are used to cap tail risk and prevent systemic crises by ensuring portfolio losses remain below a threshold with high confidence. In practice, data-driven portfolio selection with a CVaR constraint faces significant challenges: the true distribution of returns is unknown and may shift over time (regime shifts), and asset returns are often dependent (serially correlated and cross-correlated), violating i.i.d. assumptions. Classical robust optimization or distributionally robust optimization (DRO) approaches account for some distributional uncertainty by considering worst-case distributions in an ambiguity set (e.g. a Wasserstein ball around the empirical distribution). However, these methods can be sensitive to unanticipated distribution shifts: if market conditions change (e.g. a volatility regime change), a previously feasible portfolio may violate the CVaR limit. Moreover, dependent data reduces effective sample size and can lead to mis-estimated risk. This motivates the need for a shift-aware validation procedure that guarantees CVaR constraints under distribution shifts and dependence.


We focus on the CVaR-constrained portfolio problem with $d$ assets. Let $x\in\R^d$ be allocations, $\xi\in\R^d$ asset returns, $F$ the (unknown) data-generating law, and $c=-\E_F[\xi]\in\R^d$ (expected losses). The classical formulation is
\begin{equation}\label{cvar}
\begin{aligned}
& \underset{x}{\mathrm{min}}
& & c^\top x \\
& \mathrm{s.t.}
& & \inf_{t\in\R}\Big\{t+\tfrac{1}{\alpha}\E_F[\max\{-\xi^\top x - t,\,0\}]\Big\}\le \gamma,\\
&&& \sum_{l=1}^d x_l=1,\qquad x_l\ge 0~\forall l\in\{1,\ldots,d\}.
\end{aligned}
\end{equation}
To guard against distributional uncertainty, we adopt a Wasserstein DRO version centered at an empirical law $\hat F$ with radius $\delta$, so that the robust CVaR constraint becomes
\begin{equation}\label{robust cvar}
\sup_{Q\in\mathcal B(\hat F,\delta)} \inf_{t\in\R}\Big\{t+\tfrac{1}{\alpha}\E_Q[\max\{-\xi^\top x - t,\,0\}]\Big\}\le \gamma.\\
\end{equation}


For $W_1$ ambiguity balls, the robust CVaR constraint admits an exact, tractable reformulation (cf.\ Cor.~5.1 in~\cite{esfahani2018wass}):
\begin{equation}\label{cvar:reformulation}
Z=\left\{x\in\R^d:\begin{array}{l}
\delta v+ \frac{1}{n}\sum_{i=1}^n z_i\leq \alpha r,\\[0.2em]
r\leq z_i+\gamma+\xi_i^\top x\ \ \text{for }i=1,\ldots,n,\\[0.2em]
\|x\|_*\leq v,\\
v\geq 0,\ \ r\geq 0,\ \ z_i\geq 0\ \ \text{for }i=1,\ldots,n,
\end{array}\right\},
\end{equation}
where $\|\cdot\|_*$ is the dual norm tied to the ground metric. While \eqref{cvar:reformulation} is computationally convenient, selecting $\delta$ is delicate: too small jeopardizes feasibility under shift; too large kills performance. Our contribution is a data-driven selection via a \emph{shift-aware} GS validator that yields feasibility guarantees under serial dependence.

\section{Related Work and Contributions}\label{related-work-contributions}
Wasserstein DRO delivers tractable, statistically meaningful robustness~\citep{esfahani2018wass, kuhn2019tutorial}, with applications to finance and CVaR portfolios~\citep{blanchet2022drmv, esfahani2018wass, pun2023regimeswitching}. However, calibrating the radius can be challenging and often leads to overly conservative solutions. To avoid these overly conservative solutions, Lam and Qian~\cite{lamqian2019gs} develop a normalized Gaussian-supremum validator (among other validators) for data-driven optimization with uncertain constraints. Instead of fixing a very conservative reformulation (large safety margins), their approach generates a solution path of candidate portfolios indexed by a conservativeness parameter (e.g., different values of $\rho$ in the DRO formulation). They then evaluate these candidates on a holdout dataset to find the least conservative solution that still meets the constraint with high confidence ($1-\beta$). The GS validator computes a simultaneous confidence bound across the entire solution path by estimating the quantile of the supremum of Gaussian deviations.

In machine learning, covariate shift refers to the scenario where the input distribution changes between training and testing phases, while the conditional relationship (e.g., returns given factors) remains stable. A rich body of work addresses this via importance weighting techniques: direct ratio estimation (KLIEP, uLSIF)~\citep{sugiyama2012dre,kanamori2009ulsif}, kernel mean matching (KMM)~\citep{huang2007kmm}, and classifier-based ratios~\citep{bickel2009covshift}; these ideas enable targeting a \emph{shifted} law in validation. Furthermore, standard statistical inference assumes independent data, but financial time series exhibit autocorrelation, volatility clustering, and other dependencies. To overcome this, time-series inference under dependence uses block bootstraps~\citep{kunsch1989jackknife,politis1994stationary} and multiplier/wild bootstraps for dependent data~\citep{shao2010dependent,bucher2016dependent}. Our validation builds on the \emph{Gaussian-supremum validator} of Lam \& Qian~\citep{lamqian2019gs}; we extend it to non-i.i.d.\ financial data with shift-aware weighting and BMB calibration.

\textbf{Contributions.} \emph{(i)} A \textbf{shift-aware GS validator}: we reweight validation data by estimated density ratios to evaluate CVaR under a learned center $Q$, then perform Wasserstein DRO around $Q$ with a smaller radius. \emph{(ii)} A \textbf{dependence-robust calibration}: we replace i.i.d.\ Gaussian calibration by a block multiplier bootstrap for the GS band, yielding correct joint coverage under serial dependence. \emph{(iii)} \textbf{Finite-sample feasibility}: we prove that any selected portfolio meets the CVaR limit under $Q$ with probability $\ge 1-\beta$; proofs follow GS arguments with extensions for weighting and dependence.

\section{Methodology and Main Results}\label{sec:methodology}
We adopt a two-phase procedure. Phase~I generates a candidate path $\{x^*(\delta_j)\}_{j=1}^p$ by solving \eqref{cvar:reformulation} over a grid of radii $\delta_1<\cdots<\delta_p$. Phase~II validates feasibility on a holdout, time-ordered sample $\{\xi_t\}_{t=1}^{n_2}$ potentially from a shifted regime. The full algorithm is detailed in Appendix \hyperref[appendix-A]{A}.

\textbf{Shift-aware center via density ratios.} Let $P$ denote the holdout law and $Q$ the target law for deployment (current regime). We estimate $w(\xi)=q(\xi)/p(\xi)$ using a probabilistic classifier that distinguishes recent (proxy for $Q$) vs.\ holdout (proxy for $P$) samples, convert scores to odds, \emph{clip}, and normalize: $\tilde w_t=w_t/\sum_{i}w_i$. For any candidate $x$, we estimate the CVaR functional’s expectation under $Q$ by the weighted mean:
\[
\widehat\Psi_Q(x)\;=\;\min_{t\in\R}\left\{t+\tfrac{1}{\alpha}\sum_{i=1}^{n_2}\tilde w_i\,\max\{-\xi_i^\top x - t,\,0\}\right\}.
\]
Effective sample size is $n_{\text{eff}}=1/\sum_i \tilde w_i^2$; we abstain if $n_{\text{eff}}$ collapses.

\textbf{Dependence-robust GS calibration.} For each candidate $x_j=x^*(\delta_j)$ with empirical minimizer $\hat t_j$, define scores $\phi_{ij}=\hat t_j+\frac{1}{\alpha}\max\{-\xi_i^\top x_j-\hat t_j,0\}$ and weighted means $\widehat H_j=\sum_i \tilde w_i\,\phi_{ij}$. We form centered \emph{block} sums $R_{kj}=\sum_{i\in\mathcal B_k}\tilde w_i(\phi_{ij}-\widehat H_j)$ over contiguous $\{B_k\}_{k=1}^K$ of length $b$ (with $b\to\infty,\, b/n_2\to 0$). Drawing i.i.d.\ Gaussian multipliers $\varepsilon_k$, the bootstrap max statistic is
\[
T^*=\max_{1\le j\le p}\ \frac{1}{\sqrt{n_{\text{eff}}}}\ \frac{\sum_{k=1}^{K}\varepsilon_k\,R_{kj}}{\widehat\sigma_j},\quad
\widehat\sigma_j^2=\sum_i \tilde w_i\,(\phi_{ij}-\widehat H_j)^2.
\]
Let $\widehat q_{1-\beta}$ be the $(1-\beta)$-quantile of $\{T^{*(b)}\}_{b=1}^M$. We build a \emph{simultaneous upper band} for the robust constraint:
\[
U_j\;=\;\underbrace{\widehat H_j}_{\text{empirical (weighted)}}\;+\;\underbrace{\tfrac{\delta_j}{\alpha}\,\|x_j\|_*}_{\text{DRO margin}}\;+\;\widehat q_{1-\beta}\,\frac{\widehat\sigma_j}{\sqrt{n_{\text{eff}}}},
\]
and declare $x_j$ feasible if $U_j\le \gamma$. We select the \emph{least conservative feasible} radius $\delta^*=\min\{\delta_j:U_j\le\gamma\}$; ties break by objective $c^\top x_j$. Proofs for the following theorems can be found in Appendix \hyperref[appendix-B]{B} --- LLMs were used to assist in constructing these proofs.

\begin{theorem}[Feasibility under shift- and dependence-aware GS]
\label{thm:feas}
Suppose (A1)--(A4) (see Appendix \hyperref[appendix-B]{B}) and a fixed finite candidate set $\{x_j\}_{j=1}^p$. Let $\hat q_{1-\beta}$ be computed by Step~4 with block size $b\asymp n_2^{1/3}$ and $B\to\infty$. Then, with probability at least $1-\beta - o(1)$,
any pair $(x_j,\delta^*(x_j))$ produced by Step~5 satisfies
\[
\sup_{F\in\mathcal{B}(Q,\delta^*(x_j))}\mathrm{CVaR}_{\alpha}(-\xi^\top x_j)\;\le\;\gamma.
\]
In particular, the selected $(x^\star,\delta^\star)$ is feasible w.p.\ $\ge 1-\beta-o(1)$.
\end{theorem}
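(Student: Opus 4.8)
The first step is to remove the Wasserstein radius by duality. Since $\xi\mapsto\max\{-\xi^\top x-t,0\}$ is Lipschitz in the ground metric with modulus $\|x\|_*$, Corollary~5.1 of~\cite{esfahani2018wass} applied with center $Q$ gives the exact identity $\sup_{F\in\mathcal B(Q,\delta)}\mathrm{CVaR}_\alpha(-\xi^\top x)=\mathrm{CVaR}_\alpha^Q(-\xi^\top x)+\tfrac{\delta}{\alpha}\|x\|_*$, so the radius enters only as the deterministic offset already present in $U_j$. Consequently, on the event $\{U_j\le\gamma\}$ the desired conclusion $\sup_{F\in\mathcal B(Q,\delta_j)}\mathrm{CVaR}_\alpha(-\xi^\top x_j)\le\gamma$ holds as soon as $\mathrm{CVaR}_\alpha^Q(-\xi^\top x_j)\le\widehat H_j+\widehat q_{1-\beta}\,\widehat\sigma_j/\sqrt{n_{\mathrm{eff}}}$, the offset $\tfrac{\delta_j}{\alpha}\|x_j\|_*$ cancelling from both sides of the comparison. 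Because the index $j^\star=\min\{j:U_j\le\gamma\}$ returned by Step~5 is data-dependent, it suffices to prove the simultaneous coverage bound
\[
\mathbb P\!\left(\max_{1\le j\le p}\frac{\sqrt{n_{\mathrm{eff}}}\,\bigl(\mathrm{CVaR}_\alpha^Q(-\xi^\top x_j)-\widehat H_j\bigr)}{\widehat\sigma_j}\le\widehat q_{1-\beta}\right)\ge 1-\beta-o(1);
\]
the selected pair $(x^\star,\delta^\star)=(x_{j^\star},\delta_{j^\star})$ is then the special case $j=j^\star$.

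To bound the left-hand maximum I would set $g_j(t)=\E_Q[t+\tfrac1\alpha\max\{-\xi^\top x_j-t,0\}]$ and $\widehat g_j(t)=\sum_i\widetilde w_i(t+\tfrac1\alpha\max\{-\xi_i^\top x_j-t,0\})$, so that $\widehat H_j=\widehat g_j(\widehat t_j)=\min_t\widehat g_j$ while $\mathrm{CVaR}_\alpha^Q(-\xi^\top x_j)=\min_t g_j$, whence $\mathrm{CVaR}_\alpha^Q(-\xi^\top x_j)-\widehat H_j\le g_j(\widehat t_j)-\widehat g_j(\widehat t_j)$. I would then linearize about the population minimizer $t_j^\star$: a density/continuity condition near the conditional $(1-\alpha)$-quantile of $-\xi^\top x_j$ (subsumed in (A2)--(A4)) makes $g_j$ locally $C^2$ with $g_j'(t_j^\star)=0$ and gives consistency of $\widehat t_j$, and stochastic equicontinuity of the weighted empirical process $t\mapsto\widehat g_j(t)-g_j(t)$ near $t_j^\star$ (the class $\{\max\{-\xi^\top x_j-t,0\}\}_t$ is uniformly Lipschitz in $t$, hence empirical-process regular under the weak-dependence hypothesis (A1)) then yields, uniformly over the fixed finite index set, $\sqrt{n_{\mathrm{eff}}}\bigl(g_j(\widehat t_j)-\widehat g_j(\widehat t_j)\bigr)=\sqrt{n_{\mathrm{eff}}}\bigl(g_j(t_j^\star)-\widehat g_j(t_j^\star)\bigr)+o_p(1)$. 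The same second-order bookkeeping replaces the estimated, clipped ratios $\widetilde w_i$ by their oracle values $w(\xi_i)=q(\xi_i)/p(\xi_i)$: under the classifier-consistency clause of (A3) the clipping bias and the self-normalization perturbation are $o_p(n_{\mathrm{eff}}^{-1/2})$ in the relevant linear functional.

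The leading term is now a self-normalized partial sum of the weakly dependent, importance-weighted array $w(\xi_i)\bigl(\psi_j(\xi_i)-\mathrm{CVaR}_\alpha^Q(-\xi^\top x_j)\bigr)$ with $\psi_j(\xi)=t_j^\star+\tfrac1\alpha\max\{-\xi^\top x_j-t_j^\star,0\}$. Under (A1)--(A2) and the non-collapse condition $n_{\mathrm{eff}}/n_2\to\kappa\in(0,1]$ (which holds because $\sum_i\widetilde w_i^2\approx\E_P[w^2]/n_2$), a weak-dependence central limit theorem gives joint convergence of the normalized vector to a mean-zero Gaussian with long-run covariance $\Sigma$ --- unit diagonal, since $\widehat\sigma_j$ is consistent for the marginal weighted standard deviation, with off-diagonals carrying the dependence. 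The block multiplier construction is designed to reproduce $\Sigma$: $\sum_k R_{kj}R_{kl}$ is a blocked long-run covariance estimator, so with $b\asymp n_2^{1/3}\to\infty$ and $b/n_2\to0$ the usual block-bootstrap consistency arguments (cf.\ \cite{kunsch1989jackknife,shao2010dependent}) give $\sum_k R_{kj}R_{kl}\to\Sigma_{jl}$ in probability, and conditionally on the data $T^*$ is a Gaussian maximum whose law converges uniformly to that of $\max_{j\le p}G_j$ with $(G_j)\sim N(0,\Sigma)$. Since $\min_j\sigma_j^2>0$ (part of (A4)) makes the law of $\max_j G_j$ continuous, $\widehat q_{1-\beta}$ converges in probability to its $(1-\beta)$-quantile; together with the finite-$B$ Monte-Carlo error being $o(1)$, the displayed probability tends to $1-\beta$, and threading back through the duality identity completes the proof along the GS template of \cite{lamqian2019gs}.

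The main obstacle is the bootstrap-consistency step --- verifying that the block multiplier law, built from the \emph{weighted} block sums $R_{kj}$ and the \emph{ESS-normalized} scale $\widehat\sigma_j/\sqrt{n_{\mathrm{eff}}}$, reproduces exactly the long-run covariance $\Sigma$ of the normalized statistic without double counting. Two variance-inflation channels operate simultaneously: importance-weighting inflation, which $n_{\mathrm{eff}}=1/\sum_i\widetilde w_i^2$ is meant to absorb, and serial-dependence inflation, which the contiguous blocks are meant to absorb; reconciling them --- in particular confirming that $\widehat\sigma_j$, rather than a full long-run variance, is the correct centering scale given where the $1/\sqrt{n_{\mathrm{eff}}}$ factor sits --- is what forces the rate $b\asymp n_2^{1/3}$ (large enough to kill block-truncation bias, small enough that $K=\lceil n_2/b\rceil\to\infty$ blocks survive for the conditional CLT). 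The upstream secondary difficulty is holding the $\widehat t_j$ plug-in error and the density-ratio estimation/clipping error jointly at $o_p(n_{\mathrm{eff}}^{-1/2})$, which is the quantitative content that (A2)--(A4) must be strong enough to deliver.
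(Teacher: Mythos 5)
Your proposal follows essentially the same route as the paper: the Wasserstein radius is peeled off as the deterministic margin $(\delta/\alpha)\|x\|_*$ (the paper's Kantorovich--Rubinstein lemma, where only the inequality direction you note is actually needed), and feasibility is reduced to simultaneous coverage of the self-normalized, weighted CVaR deviations calibrated by the block multiplier bootstrap, exactly as in the paper's Lemma on the weighted CLT with block multipliers, with the union over candidates absorbed into the max statistic. The extra empirical-process bookkeeping you supply --- controlling the plug-in $\hat t_j$ via the min-inequality $\mathrm{CVaR}_\alpha^Q - \widehat H_j \le g_j(\hat t_j)-\widehat g_j(\hat t_j)$, replacing clipped estimated ratios by oracle ones, and reconciling the marginal scale $\widehat\sigma_j$ with the long-run variance under self-normalization --- is precisely the content the paper compresses into its citation-style proof of that lemma (and its remark on self-normalization), so your treatment is, if anything, more careful on the points the paper glosses over.
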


\begin{theorem}[Validation-driven suboptimality bound]
\label{thm:perf}
Let $(x_Q^\dagger,\delta_Q^\dagger)$ be any feasible solution to the $Q$-robust problem
$\min\{c^\top x:\sup_{F\in\mathcal{B}(Q,\delta)}\mathrm{CVaR}_{\alpha}(-\xi^\top x)\le \gamma,\, \delta\in[\delta_{\min},\delta_{\max}],\, x\in \mathcal{X}\}$,
with $\mathcal{X}=\{x:\mathbf{1}^\top x=1,\,x\ge 0\}$. Under (A1)--(A4) and convexity of $c^\top x$, the output $(x^\star,\delta^\star)$ of Algorithm~\ref{alg:shift-gs} satisfies, with probability $1-\beta-o(1)$,
\[
c^\top x^\star - c^\top x_Q^\dagger
\;\le\;
\underbrace{\inf_{j}\big(c^\top x_j - c^\top x_Q^\dagger\big)}_{\text{approximation (menu) error}}
\;+\;
\underbrace{L_c\,\|x^\star-x_j\|_2}_{\text{optimization gap for chosen candidate}}
\]
for any candidate $x_j$; in particular, if the menu contains $x_Q^\dagger$ (or approximates it to $\varepsilon$ in $\ell_2$), the excess cost is $O_p\big(\varepsilon\big)$.
\end{theorem}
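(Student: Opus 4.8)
The plan is to (i) reduce the cost comparison to a Lipschitz estimate for the linear objective, (ii) invoke Theorem~\ref{thm:feas} to certify that $x^\star$ is a genuine $Q$-robust-feasible output on an event of probability $\ge 1-\beta-o(1)$, and (iii) show that a candidate close to $x_Q^\dagger$ is not falsely rejected by the validator, which upgrades the generic bound to the $O_p(\varepsilon)$ statement.

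For (i) and the first displayed inequality, note that for every candidate index $j$,
\[
c^\top x^\star - c^\top x_Q^\dagger \;=\; (c^\top x^\star - c^\top x_j) + (c^\top x_j - c^\top x_Q^\dagger)\;\le\; L_c\,\|x^\star - x_j\|_2 + (c^\top x_j - c^\top x_Q^\dagger),
\]
where $L_c=\|c\|_2$ is the exact Lipschitz modulus of the linear (hence convex) objective, by Cauchy--Schwarz; taking the infimum over $j$ gives the claimed ``menu error $+$ optimization gap'' bound, and this part is deterministic. The randomness enters only through Theorem~\ref{thm:feas}: on its good event $\mathcal E$ (probability $\ge 1-\beta-o(1)$) the returned pair $(x^\star,\delta^\star)$ satisfies $\sup_{F\in\mathcal{B}(Q,\delta^\star)}\mathrm{CVaR}_{\alpha}(-\xi^\top x^\star)\le\gamma$, so $x^\star$ is a legitimate feasible point and the comparison with $x_Q^\dagger$ is meaningful. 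On $\mathcal E$ I would also record a sharpening: since enlarging $\delta$ only shrinks the feasible set $Z$ of \eqref{cvar:reformulation}, the optimal value $\delta\mapsto\min\{c^\top x: x\in Z(\delta)\}$ is nondecreasing, so by the selection rule $\delta^\star=\min\{\delta_j:U_j\le\gamma\}$ the output $x^\star$ has the smallest objective among \emph{all} candidates passing the validator; hence $c^\top x^\star\le c^\top x_j$ for every passing $x_j$, and against such $x_j$ the optimization-gap term is $\le 0$.

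For (iii), suppose the menu contains $x_{j_0}$ with $\|x_{j_0}-x_Q^\dagger\|_2\le\varepsilon$ (the case $x_{j_0}=x_Q^\dagger$ being $\varepsilon=0$). It suffices to show $x_{j_0}$ passes the validator ($U_{j_0}\le\gamma$) with probability $1-o(1)$, since then $c^\top x^\star\le c^\top x_{j_0}\le c^\top x_Q^\dagger+L_c\varepsilon$. I would argue: (a) under (A1)--(A4) the clipped normalized weights keep $n_{\mathrm{eff}}\to\infty$, so no abstention occurs; (b) the weighted law of large numbers for weakly dependent, density-ratio-reweighted scores (the same regularity invoked for Theorem~\ref{thm:feas}) gives $\widehat H_{j_0}\to\mathrm{CVaR}_{\alpha}^{Q}(-\xi^\top x_{j_0})$ and $\widehat\sigma_{j_0}=O_p(1)$, while $1/\sqrt{n_{\mathrm{eff}}}\to0$; (c) since the candidate set is fixed and finite, the block-multiplier-bootstrap quantile satisfies $\widehat q_{1-\beta}=O_p(1)$ (maximum of $p=O(1)$ coordinates with bounded bootstrap variance under the mixing and moment conditions), so the band width $\widehat q_{1-\beta}\widehat\sigma_{j_0}/\sqrt{n_{\mathrm{eff}}}\to0$; (d) by the exact Wasserstein identity underlying \eqref{robust cvar}--\eqref{cvar:reformulation}, $\sup_{F\in\mathcal{B}(Q,\delta)}\mathrm{CVaR}_{\alpha}(-\xi^\top x)=\mathrm{CVaR}_{\alpha}^{Q}(-\xi^\top x)+\tfrac{\delta}{\alpha}\|x\|_*$, so feasibility of $(x_Q^\dagger,\delta_Q^\dagger)$, a strict-feasibility (Slater) margin in (A1)--(A4), and continuity of $(x,\delta)\mapsto\mathrm{CVaR}_{\alpha}^{Q}(-\xi^\top x)+\tfrac{\delta}{\alpha}\|x\|_*$ jointly yield $\mathrm{CVaR}_{\alpha}^{Q}(-\xi^\top x_{j_0})+\tfrac{\delta_{j_0}}{\alpha}\|x_{j_0}\|_*\le\gamma-c_0$ for some $c_0>0$ once $\varepsilon$ and the grid spacing are small enough (with $\delta_{j_0}$ within $o(1)$ of $\delta_Q^\dagger\in[\delta_{\min},\delta_{\max}]$). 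Combining (a)--(d), $U_{j_0}=\widehat H_{j_0}+\tfrac{\delta_{j_0}}{\alpha}\|x_{j_0}\|_*+\widehat q_{1-\beta}\widehat\sigma_{j_0}/\sqrt{n_{\mathrm{eff}}}\le\gamma$ with probability $1-o(1)$, i.e., $x_{j_0}$ passes.

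The main obstacle is steps (b)--(c): controlling the \emph{weighted} empirical quantities $\widehat H_j$ and the bootstrap quantile \emph{uniformly over the menu} under serial dependence \emph{and} with estimated rather than true density ratios, so that the density-ratio estimation error propagates into $\widehat H_{j_0},\widehat\sigma_{j_0}$ but vanishes faster than the confidence band shrinks; this is exactly where the block length $b\asymp n_2^{1/3}$, the mixing rate, and the weight-moment bounds from (A1)--(A4) are used, and I would reuse the estimates established for Theorem~\ref{thm:feas} rather than reprove them. A secondary subtlety is combinatorial: the $O_p(\varepsilon)$ conclusion needs the radius grid fine enough that the solver output at some $\delta_{j_0}$ near $\delta_Q^\dagger$ stays within $\varepsilon$ of $x_Q^\dagger$ (stability of \eqref{cvar:reformulation} in $\delta$); if instead one assumes literally that $x_Q^\dagger\in\{x_j\}$ with its generating radius not exceeding $\delta_Q^\dagger$, this step is vacuous.
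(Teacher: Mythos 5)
Your proof of the displayed bound follows essentially the paper's route: the paper likewise reduces it to feasibility from Theorem~\ref{thm:feas} plus $L_c$-Lipschitzness of $c^\top x$ with $L_c=\|c\|_2$, and your per-candidate decomposition $c^\top x^\star - c^\top x_Q^\dagger \le (c^\top x_j - c^\top x_Q^\dagger)+L_c\|x^\star-x_j\|_2$ is the same (deterministic) step. Where you genuinely diverge is the $O_p(\varepsilon)$ conclusion. The paper disposes of it in one sentence --- the ``conservative filtering \ldots does not increase $c^\top x$ relative to the best feasible candidate present in the menu'' --- without showing either (a) that an $\varepsilon$-approximation of $x_Q^\dagger$ actually passes the validator, or (b) that the selection rule (which minimizes $\delta^\star(x_j)$, not $c^\top x_j$) returns a cost-minimal candidate among those that pass. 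You supply both: (b) via monotonicity of the DRO optimal value in $\delta$ along the solution path, and (a) via consistency of the weighted CVaR estimator, a vanishing band width $\hat q_{1-\beta}\hat\sigma/\sqrt{n_{\mathrm{eff}}}$, and a strict-feasibility margin; this is a more complete argument than the paper's and makes explicit what the one-sentence claim hides. Two caveats. First, your step (d) attributes the Slater-type margin to (A1)--(A4); the paper's assumptions (mixing, moments, clipped weights, Lipschitz loss) contain no such condition, so you are adding a hypothesis --- a defensible one, since without it a boundary-feasible $x_Q^\dagger$ could be rejected by the band with non-negligible probability and the $O_p(\varepsilon)$ claim would fail (the paper's own sketch quietly needs the same thing) --- but it should be stated as an extra assumption rather than cited as given. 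Second, your cost-minimality sharpening presumes the menu is exactly the path $x^*(\delta_j)$ with selection by the generating radius, whereas Algorithm~\ref{alg:shift-gs} selects by the analytical radius $\delta^\star(x_j)$ and the experimental menu adds Dirichlet candidates; restricting to the grid-indexed path (as in Section~3) or adding a tie-break by $c^\top x_j$ closes this. Neither caveat affects the main displayed inequality, which you prove correctly and by the same approach as the paper.
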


\section{Numerical Experiments}\label{sec:numerical-experiments}
We evaluate three validators on the CVaR-constrained portfolio problem: \textsc{NEW} (our shift- \& dependence-aware GS with block multipliers and analytical radius), \textsc{OLD-NGS} (Lam--Qian normalized GS assuming i.i.d.), and \textsc{IW-CV} (importance-weighted cross-validation baseline). In all our experiments, we set our confidence level $1-\beta=0.9$. Consider the problem with $d=8$ assets with AR(1) returns and two scenarios: (1) No shift: training/validation/test all follow $P$ with autoregression $\phi=0.3$ and covariance $\Sigma$, and (2) Shift: test follows $Q$ with mean deterioration ($\mu_Q=\mu_P-\Delta$), volatility inflation ($\Sigma_Q=1.7^2\Sigma$), and stronger persistence $\phi_Q=0.45$. Sample sizes are $(n_{\text{train}},n_{\text{val}},n_{\text{test}})=(1000,1200,15000)$. We set $\alpha=0.05$ and fix the risk budget $\gamma$ \emph{ex ante} from the equal-weight portfolio's empirical $\mathrm{CVaR}_\alpha$ on Scenario~1 training, multiplied by a 10\% margin; the same $\gamma$ is used in both scenarios for all methods. \textsc{NEW} uses logistic density-ratio weights on validation (clipped odds), block length $b\asymp n_{\text{val}}^{1/3}$, $B=800$ multipliers, and the analytical radius $\delta^*(x)=\alpha\,[\gamma-\widehat H_w(x)-\hat q\,\hat\sigma_w(x)/\sqrt{n_{\mathrm{eff}}}]_+/\|x\|_2$ (clipped to $[10^{-3},2\!\times\!10^{-2}]$). Candidate portfolios $x(\delta)$ are generated by solving the weighted Wasserstein--CVaR reformulation with $\|\cdot\|_2$ (Clarabel), using blended training weights; we also include a small random Dirichlet menu for diversity. \textsc{OLD-NGS} is identical except it sets $w_i\equiv 1/n_{\text{val}}$ and $b=1$. \textsc{IW-CV} uses the same weights as \textsc{NEW} but selects $\delta$ by $K$-fold cross-validation over a grid (here $K{=}5$), making it notably slower. We repeat each scenario for $R=1000$ Monte Carlo replications and report: \emph{Feas.} = fraction of replications where the test robust constraint (worst-case CVaR with selected $\delta$) is $\le\gamma$; \emph{Cost} = $c^\top x$ (lower is better); test $\mathrm{CVaR}_\alpha$ (\emph{CVaR}), the robust left-hand side \emph{LHS} $=\mathrm{CVaR}_\alpha+\frac{\delta}{\alpha}\|x\|_2$; selected $\delta$; and median wall-clock \emph{runtime} per replication (s).

\begin{table}[t]
\centering
\caption{Scenario 1 (no shift) results.}
\label{tab:scen1}
\begin{tabular}{lcccccc}
\toprule
Method & Feas.~(↑) & Obj.~(↓) & CVaR & LHS & $\delta$ & Runtime (s) \\
\midrule
\textsc{NEW}     & 0.91 & $-1.05{\times}10^{-3}$ & 0.031 & 0.059 & 0.008 & 11.1 \\
\textsc{IW-CV}          & 0.90          & \phantom{$-$}$-1.07{\times}10^{-3}$ & 0.030 & 0.060 & 0.009 & \textbf{186.0} \\
\textsc{OLD-NGS} & 0.86        & $-1.27{\times}10^{-3}$ & 0.023 & 0.086 & 0.015 & 10.7 \\
\bottomrule
\end{tabular}
\end{table}

\begin{table}[t]
\centering
\caption{Scenario 2 (shift) results.}
\label{tab:scen2}
\begin{tabular}{lcccccc}
\toprule
Method & Feas.~(↑) & Obj.~(↓) & CVaR & LHS & $\delta$ & Runtime (s) \\
\midrule
\textsc{NEW}     & 0.90 & $-0.95{\times}10^{-3}$ & 0.046 & 0.092 & 0.010 & 11.2 \\
\textsc{IW-CV}          & 0.90          & $-0.96{\times}10^{-3}$ & 0.047 & 0.093 & 0.010 & \textbf{190.1} \\
\textsc{OLD-NGS} & 0.28        & $-0.72{\times}10^{-3}$ & 0.041 & 0.104 & 0.016 & 10.8 \\
\bottomrule
\end{tabular}
\end{table}

In Scenario~1, \textsc{NEW} attains the target feasibility $1-\beta$ with cost comparable to \textsc{IW-CV}, while running an order of magnitude faster (no $K$-fold search). \textsc{OLD-NGS} under-covers due to serial dependence, often selecting larger radii that inflate the robust LHS yet still miss the target. In Scenario~2, the i.i.d.\ band becomes substantially miscalibrated and feasibility collapses, whereas \textsc{NEW} maintains coverage by combining density-ratio weighting (shift-aware centering/variance) with block multipliers (dependence). Across both scenarios, \textsc{IW-CV} performs similarly to \textsc{NEW} in feasibility/cost but slower from repeated model re-fits across folds; it could also over-regularize when folds experience different volatility levels, whereas \textsc{NEW} uses a single weighted, dependence-aware calibration that aligns more closely with the one-step deployment regime. 

\section{Conclusion}\label{sec:conclusion}
We developed a validation framework for CVaR-constrained portfolio selection that is simultaneously \emph{shift-aware} and \emph{dependence-robust}. By re-centering Wasserstein-DRO around a learned target distribution $Q$ via density-ratio reweighting and calibrating feasibility with a Gaussian-supremum band using block multiplier bootstrap, the method selects the least conservative portfolio that still satisfies the CVaR limit with confidence $(1-\beta)$. Our guarantees hinge on absolute-regularity dependence, accurate density-ratio weights, and a finite candidate menu; in practice the procedure can be sensitive to norm scaling, block length, and heavy tails, and abstention/clipping choices may reduce coverage or power under severe shift. Future work includes broader numerical studies on real financial data (e.g., multi-asset equity/rates/FX portfolios across calm and crisis regimes) with transaction costs and liquidity/short-sale constraints, head-to-head comparisons against strong DRO baselines, and methodological extensions to heavy-tailed models, alternative ambiguity sets, and adaptive radius/portfolio selection.

\bibliographystyle{abbrvnat}  
\bibliography{refs}          

\newpage
\appendix

\section*{Appendix A}
\label{appendix-A}

\begin{algorithm}[h]
\caption{Shift-aware GS validation with block multipliers for Wasserstein--CVaR portfolios}
\label{alg:shift-gs}
\begin{algorithmic}[1]
\Require Training $\{\xi_t\}_{t=1}^{n_1}$, validation $\{\xi_{n_1+1},\dots,\xi_{n_1+n_2}\}$, level $\alpha$, miscoverage $\beta$, budget $\gamma$, norm dual $\|\cdot\|_*$, block length $b$, bootstrap reps $B$, recent fraction $m/n_2$.
\State \textbf{Shift-aware validation weights.} Split validation into early/late:
$D_{\text{early}}=\{\xi_{1:n_2-m}\}$, $D_{\text{late}}=\{\xi_{n_2-m+1:n_2}\}$.
Fit a probabilistic classifier, set $w_i \propto \frac{p(\text{late}\mid \xi_i)}{1-p(\text{late}\mid \xi_i)}$, normalize $\sum_i w_i=1$, and compute $n_{\text{eff}}=(\sum_i w_i^2)^{-1}$.
\State \textbf{Candidate generation.} For each $\delta$ on a grid (or skip if using the analytical $\delta^\star$), solve the weighted
Wasserstein--CVaR reformulation to get candidates $\{x_\ell\}_{\ell=1}^L$; let $c=-\sum_i \tilde w_i \xi_i$ on training.
\State \textbf{Validation CVaR map.} For each candidate $x_j$, compute $t_w(x_j)\in\arg\min_t\sum_i w_i\phi_{x_j}(\xi_i;t)$,
$\hat H_w(x_j)=\sum_i w_i \phi_{x_j}(\xi_i; t_w(x_j))$, and
$\hat\sigma_w^2(x_j)=\sum_i w_i\{\phi_{x_j}(\xi_i; t_w(x_j))-\hat H_w(x_j)\}^2$.
\State \textbf{Block multiplier GS (with block sums).} Partition validation indices into $K=\lfloor n_2/b\rfloor$ contiguous blocks $B_1,\dots,B_K$ (discard remainder).
For each $(k,j)$ define the \emph{block sum}
\[
S_{kj} := \sum_{i\in B_k} w_i\big\{\phi_{x_j}(\xi_i;t_w(x_j))-\hat H_w(x_j)\big\}.
\]
Draw i.i.d.\ multipliers $\varepsilon^{(r)}\sim\mathcal N(0,I_K)$, $r=1,\dots,B$, and compute
\[
T^{(r)} := \max_{1\le j\le L}\ \frac{\sum_{k=1}^K \varepsilon_k^{(r)} S_{kj}}{\hat\sigma_w(x_j)\sqrt{n_{\mathrm{eff}}}},
\qquad \hat q_{1-\beta} := \text{$(1-\beta)$-quantile of }\{T^{(r)}\}_{r=1}^B.
\]
\State \textbf{Analytical radius \& feasibility filter.} For each $j$, set
\[
\delta^\star(x_j) := \mathrm{clip}\!\left(\frac{\alpha\,[\,\gamma - \hat H_w(x_j) - \hat q_{1-\beta}\,\hat\sigma_w(x_j)/\sqrt{n_{\mathrm{eff}}}\,]_+}{\|x_j\|_*}\,;\ \delta_{\min},\delta_{\max}\right),
\]
and the validated upper bound
\[
U(x_j) := \hat H_w(x_j) + \frac{\hat q_{1-\beta}\,\hat\sigma_w(x_j)}{\sqrt{n_{\mathrm{eff}}}} + \frac{\delta^\star(x_j)}{\alpha}\,\|x_j\|_*.
\]
Define the validated set $\mathcal F := \{\, j:\ U(x_j)\le \gamma \,\}$. \textbf{If }$\mathcal F=\emptyset$, abstain. Note that,
\[
U(x)
=\max\!\left\{\gamma,\ \hat H_w(x)+\frac{\hat q_{1-\beta}\,\hat\sigma_w(x)}{\sqrt{n_{\mathrm{eff}}}}\right\}.
\]
Hence $U(x)\le \gamma$ iff $\hat H_w(x)+\hat q_{1-\beta}\hat\sigma_w(x)/\sqrt{n_{\mathrm{eff}}}\le \gamma$, and among validated candidates $U(x)\equiv\gamma$.

\State \textbf{Selection among validated candidates.} Return the least-conservative feasible solution $x^\star \in \arg\min_{j\in\mathcal F} \delta^\star(x_j)$.
\end{algorithmic}
\end{algorithm}

\paragraph{Notation.}
For $x\in\mathbb{R}^d$ and return $\xi\in\mathbb{R}^d$ define the CVaR map
\[
\phi_x(\xi; t) \;:=\; t + \frac{1}{\alpha}\big(-\xi^\top x - t\big)_+,
\qquad
\text{and}\quad
\Phi_x(\xi) \;:=\; \inf_{t\in\mathbb{R}} \phi_x(\xi; t).
\]
On a weighted sample $\{(\xi_i,w_i)\}_{i=1}^{n_2}$ with $w_i\ge 0,\,\sum_i w_i=1$, let
\[
t_w(x)\in\arg\min_{t\in\mathbb{R}} \sum_{i=1}^{n_2} w_i\,\phi_x(\xi_i;t),\qquad
\widehat H_w(x)\;:=\;\sum_{i=1}^{n_2} w_i\,\phi_x(\xi_i;t_w(x)).
\]
Let $\hat\sigma_w^2(x):=\sum_i w_i\big(\phi_x(\xi_i;t_w(x))-\widehat H_w(x)\big)^2$ and
$n_{\mathrm{eff}}:=1/\sum_i w_i^2$.
We take the dual norm to be $\|\cdot\|_*=\|\cdot\|_2$.

\paragraph{Baselines.}
(i) \emph{OLD NGS (i.i.d.).} Replace $w_i\equiv 1/n_2$, set $b=1$, and compute $\hat q_{1-\beta}$ from the (unnormalized) Gaussian supremum of the empirical covariance as in \cite{lamqian2019gs}.\\
(ii) \emph{IW plug-in.} Use $w_i$ as in Step~1, skip Step~4 (set $\hat q_{1-\beta}=0$), and select $\delta^*(x)$ as in Step~5 with $\hat\sigma_w\equiv 0$.

\vspace{1ex}
\noindent\textbf{Remark.} When $w_i\equiv 1/n_2$ and $b=1$, Algorithm~\ref{alg:shift-gs} reduces to the normalized GS validator of \cite{lamqian2019gs}; when $b>1$ but $w_i\equiv 1/n_2$ we obtain a serial-dependence robust variant; with general $w$ and $b>1$ we obtain the proposed shift- and dependence-aware validator.

\section*{Appendix B}\label{appendix-B}

We sketch the key ingredients and then give full proofs.

\subsection*{B.1 \quad Assumptions}\label{sec:assumptions}

\textbf{(A1) Dependence.}
We assume the stationary sequence $(\xi_t)$ is \emph{absolutely regular} (also known as beta-mixing).
Let $b(k)$ denote the absolute-regularity mixing coefficients, defined by
\[
b(k)
:= \sup_{n\in\mathbb{Z}} \,
\mathbb{E}\Big[ \sup_{B\in \mathcal{F}_{n+k}^\infty}
\big| \mathbb{P}(B \mid \mathcal{F}_{-\infty}^{n}) - \mathbb{P}(B) \big| \Big].
\]
We require the summability condition $\sum_{k\ge 1} b(k)^{\delta/(2+\delta)} < \infty$ for some $\delta>0$.

\textbf{(A2) Moments.} $\sup_{\|x\|_2\le 1}\mathbb{E}\big[|\xi^\top x|^{2+\rho}\big]<\infty$.

\textbf{(A3) Density ratio regularity.} The shift-aware weights $w_i$ satisfy $w_i\ge 0$, $\sum_i w_i=1$, and there exist constants $0<c_w\le C_w<\infty$ such that $c_w/n_2\le w_i\le C_w/n_2$ for all $i$ (this holds if the logistic odds are clipped).

\textbf{(A4) Lipschitz loss.} For any $x,x'\in\mathbb{R}^d$ and any $t\in\mathbb{R}$,
$\big|\phi_x(\xi;t)-\phi_{x'}(\xi;t)\big|\le \frac{\|\xi\|_2}{\alpha}\,\|x-x'\|_2$.

\subsection*{B.2 \quad Technical lemmas}

\begin{lemma}[Weighted CVaR Lipschitzness]
\label{lem:lipschitz}
Let $t_w(x)\in\arg\min_t\sum_i w_i\,\phi_x(\xi_i;t)$. Under (A4),
$x\mapsto \widehat H_w(x)=\sum_i w_i\,\phi_x(\xi_i;t_w(x))$ is $(\mathbb{E}_w\|\xi\|_2)/\alpha$-Lipschitz, where $\mathbb{E}_w$ denotes expectation w.r.t.\ the empirical measure with weights $w$.
\end{lemma}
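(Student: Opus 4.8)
The plan is to establish Lipschitz continuity of $\widehat H_w$ directly from the pointwise Lipschitz bound in (A4), using the standard fact that the infimum of a family of Lipschitz functions (sharing a common modulus) is itself Lipschitz. Concretely, fix $x, x' \in \mathbb{R}^d$. For each fixed $t$, define $G(t; x) := \sum_i w_i\, \phi_x(\xi_i; t)$. By (A4) and $w_i \ge 0$, $\sum_i w_i = 1$, we have $|G(t;x) - G(t;x')| \le \sum_i w_i \frac{\|\xi_i\|_2}{\alpha}\|x - x'\|_2 = \frac{\mathbb{E}_w\|\xi\|_2}{\alpha}\|x - x'\|_2$ uniformly in $t$. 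Since $\widehat H_w(x) = \inf_t G(t;x)$ and $\widehat H_w(x') = \inf_t G(t;x')$, and two infima of functions that are uniformly within $\epsilon$ of each other differ by at most $\epsilon$, we conclude $|\widehat H_w(x) - \widehat H_w(x')| \le \frac{\mathbb{E}_w\|\xi\|_2}{\alpha}\|x - x'\|_2$.

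To make the infimum step rigorous, I would argue as follows: let $L := \frac{\mathbb{E}_w\|\xi\|_2}{\alpha}\|x-x'\|_2$. For any $t$, $G(t; x') \le G(t; x) + L$, so taking the infimum over $t$ on the left gives $\widehat H_w(x') \le G(t;x) + L$ for every $t$; now taking the infimum over $t$ on the right yields $\widehat H_w(x') \le \widehat H_w(x) + L$. By symmetry (swapping the roles of $x$ and $x'$), $\widehat H_w(x) \le \widehat H_w(x') + L$, and combining the two inequalities gives the claim. One should also note that the infimum is attained — the map $t \mapsto t + \frac{1}{\alpha}(-\xi_i^\top x - t)_+$ is convex, piecewise linear, and coercive (it behaves like $t$ as $t \to +\infty$ and like $(1 - 1/\alpha)t$ with $1/\alpha > 1$, hence slope $< 0$, as $t \to -\infty$), so $G(\cdot;x)$ is a finite convex coercive function and $t_w(x)$ exists; but attainment is not actually needed for the Lipschitz bound, only for the $\arg\min$ notation to be well-posed.

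There is essentially no hard part here — this is a routine ``infimum of uniformly close functions'' argument, and the only thing to be careful about is that the Lipschitz constant in (A4) is $\|\xi\|_2/\alpha$ at the level of individual samples $\xi_i$, so after weighted averaging it becomes $\sum_i w_i \|\xi_i\|_2 / \alpha$, which is exactly $(\mathbb{E}_w\|\xi\|_2)/\alpha$ by definition of the weighted empirical measure. The one subtlety worth flagging is that this is a \emph{sample-dependent} (random) Lipschitz constant; it is finite almost surely under (A2)–(A3) since $\mathbb{E}_w\|\xi\|_2 \le (C_w/n_2)\sum_i \|\xi_i\|_2 < \infty$ for any realized sample, and later applications (e.g. the optimization-gap term $L_c\|x^\star - x_j\|_2$ in Theorem~\ref{thm:perf}) will want a uniform-in-$n$ bound, which follows from $\mathbb{E}_w\|\xi\|_2 \to \mathbb{E}_Q\|\xi\|_2 < \infty$ under the mixing and moment assumptions — but that convergence is a separate statement and not part of this lemma.
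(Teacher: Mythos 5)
Your proof is correct and follows essentially the same argument as the paper: bound the difference of the weighted objectives uniformly in $t$ via (A4) and then compare the two infima, which is exactly what the paper does by evaluating at the other candidate's minimizer $t_w(x')$ and symmetrizing. Your version merely phrases the comparison through infima (so attainment is not needed) and adds the coercivity/attainment remark, which is a harmless refinement of the same approach.
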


\begin{proof}
For any $x,x'$, by optimality of $t_w(x)$ and $t_w(x')$ and the Lipschitz property of $\phi_x(\cdot;t)$ in $x$ we have
\(
\widehat H_w(x)-\widehat H_w(x') \le \sum_i w_i\big(\phi_x(\xi_i;t_w(x'))-\phi_{x'}(\xi_i;t_w(x'))\big)
\le \frac{\|x-x'\|_2}{\alpha}\sum_i w_i\|\xi_i\|_2.
\)
Symmetrizing yields the claim.
\end{proof}

\paragraph{Block statistic.}
For block $k$ with index set $B_k$, define
\[
S_{kj} := \sum_{i\in B_k} w_i\big\{\phi_{x_j}(\xi_i; t_w(x_j))-\hat H_w(x_j)\big\}.
\]

\begin{lemma}[Weighted CLT with block multipliers]\label{lem:bmb}
Under \textnormal{(A1)--(A3)}, for any fixed finite candidate set $\{x_j\}_{j=1}^p$,
\[
\max_{1\le j\le p}
\frac{\sqrt{n_{\mathrm{eff}}}\,\big\{\hat H_w(x_j)-\mathbb E_w[\Phi_{x_j}(\xi)]\big\}}
{\hat\sigma_w(x_j)}
\ \Rightarrow\
\max_{1\le j\le p} Z_j,
\]
where $(Z_1,\ldots,Z_p)$ is mean-zero Gaussian with covariance induced by the long-run variance
of $\phi_{x_j}(\xi; t_w(x_j))$. Moreover, the block multiplier statistic
\[
T^{(r)}:=\max_{1\le j\le p}
\frac{\sum_{k=1}^K \varepsilon^{(r)}_k\, S_{kj}}
{\hat\sigma_w(x_j)\sqrt{n_{\mathrm{eff}}}},
\qquad \varepsilon^{(r)}\sim\mathcal N(0,I_K),
\]
with block size $b\asymp n_2^{1/3}$ consistently estimates the law of $\max_j Z_j$, i.e.,
$\hat q_{1-\beta}\to q_{1-\beta}$ in probability.
\end{lemma}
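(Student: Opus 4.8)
\emph{Proof plan.} The plan is to establish the two displays in turn: a weighted CLT for the self-normalized statistic, and a conditional (block-multiplier) CLT reproducing the same Gaussian limit so that the empirical $(1-\beta)$-quantile converges. Throughout I condition on the classifier used to form the weights (fit on the held-out early/late split), so $\{w_i\}_{i=1}^{n_2}$ is a deterministic array obeying (A3), and $\beta$-mixing of $(\xi_t)$ is inherited by the conditioned sample. Since $\{x_j\}_{j=1}^p$ is fixed and finite, every step that must hold ``uniformly in $j$'' is just a finite maximum, and $\max_j$ is handled by the continuous mapping theorem.

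\textbf{Linearization and forward CLT.} For each $j$ put $M_j(t):=\mathbb E[\phi_{x_j}(\xi;t)]$, $\hat M_j(t):=\sum_i w_i\phi_{x_j}(\xi_i;t)$, and let $t_j^\star\in\arg\min_t M_j(t)$ (the $\alpha$-VaR under the target law). Convexity of $\hat M_j,M_j$ in $t$ together with a weighted ULLN on a compact $t$-window --- valid under (A1)--(A3) since $\{\xi\mapsto(-\xi^\top x_j-t)_+\}$ is a VC-type class with bounded Lipschitz increments and $\beta$-mixing preserves Glivenko--Cantelli --- gives $t_w(x_j)\to t_j^\star$ in probability. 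I then linearize by the classical sandwich: optimality of $t_w(x_j)$ gives $\hat H_w(x_j)-M_j(t_j^\star)\le(\hat M_j-M_j)(t_j^\star)$, while $t_j^\star$ minimizing $M_j$ gives $\hat H_w(x_j)-M_j(t_j^\star)\ge(\hat M_j-M_j)(t_w(x_j))$; stochastic equicontinuity of the weighted empirical process $t\mapsto(\hat M_j-M_j)(t)$ at $t_j^\star$ (from the same VC-type structure under (A1)--(A2), with weights absorbed by (A3)) closes both ends at $(\hat M_j-M_j)(t_j^\star)+o_p(n_{\mathrm{eff}}^{-1/2})$, so
\[
\hat H_w(x_j)-M_j(t_j^\star)=\sum_{i=1}^{n_2}w_i\big\{\phi_{x_j}(\xi_i;t_j^\star)-M_j(t_j^\star)\big\}+o_p(n_{\mathrm{eff}}^{-1/2}),
\]
with no density condition on $-\xi^\top x_j$ needed. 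Likewise $t_w(x_j)\to t_j^\star$ and a weighted LLN give $\hat\sigma_w^2(x_j)\to\sigma_j^2:=\mathrm{Var}(\phi_{x_j}(\xi;t_j^\star))>0$. Writing $\psi_{ij}:=\phi_{x_j}(\xi_i;t_j^\star)-M_j(t_j^\star)$ and applying a CLT for weighted triangular arrays of $\beta$-mixing vectors to $(\sqrt{n_{\mathrm{eff}}}\sum_iw_i\psi_{ij})_{j=1}^p$ --- available under (A1) (summability of $b(k)^{\delta/(2+\delta)}$), (A2), and (A3), so $n_{\mathrm{eff}}\asymp n_2$ --- yields joint asymptotic normality with covariance equal to the long-run covariance of $(\psi_{\cdot j})$ (rescaled by $\lim n_{\mathrm{eff}}\sum_iw_i^2$). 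Dividing coordinatewise by $\hat\sigma_w(x_j)\to\sigma_j$ and invoking the continuous mapping theorem gives the first display with $(Z_j)$ mean-zero Gaussian and $\mathrm{Cov}(Z_j,Z_{j'})$ equal to that long-run covariance normalized by $\sigma_j\sigma_{j'}$ --- the ``long-run variance over marginal variance'' scaling of the statement ($\mathrm{Var}(Z_j)=1$ only in the i.i.d. case). The Lipschitz bound of Lemma~\ref{lem:lipschitz} is not needed here because the menu is finite, but it is what later transfers feasibility to nearby candidates.

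\textbf{Bootstrap consistency.} Conditionally on the data, $(\sum_{k=1}^K\varepsilon_k^{(r)}S_{kj})_{j=1}^p$ is exactly mean-zero Gaussian with covariance $\hat\Gamma_{jj'}=\sum_{k=1}^KS_{kj}S_{kj'}$, so $T^{(r)}$ is a continuous (max) functional of a Gaussian vector with a data-dependent covariance. The crux is $n_{\mathrm{eff}}\,\hat\Gamma_{jj'}/(\hat\sigma_w(x_j)\hat\sigma_w(x_{j'}))\to\mathrm{Cov}(Z_j,Z_{j'})$ in probability: this is consistency of a weighted non-overlapping block long-run-covariance estimator under $\beta$-mixing, where $b\to\infty$ removes the bias (within-block partial sums recover the long-run variance by Ces\`aro averaging of the autocovariances) while $K=\lfloor n_2/b\rfloor\to\infty$ with blocks asymptotically independent (separation $b$, $\beta$-mixing) kills the variance; $b\asymp n_2^{1/3}$ meets both requirements. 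Given this, the conditional law of the Gaussian vector converges in probability to $N(0,\mathrm{Cov}(Z))$, hence by a conditional continuous-mapping / Slutsky argument the conditional law of $T^{(r)}$ converges in probability (say in Kolmogorov distance) to $\mathcal L(\max_jZ_j)$. Since the Gaussian maximum is non-degenerate, its distribution function is continuous and strictly increasing near the $(1-\beta)$-level, so convergence of laws upgrades to convergence of quantiles: $\hat q_{1-\beta}\to q_{1-\beta}$ in probability.

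\textbf{Main obstacle.} The delicate ingredients are all in the first two steps under dependence: the stochastic-equicontinuity / maximal inequality for the weighted empirical process of the hinge-loss class $\{(-\xi^\top x_j-t)_+\}$ under only $\beta$-mixing and $(2+\rho)$ moments, and the matching $\beta$-mixing CLT with the correct long-run covariance. Rather than re-derive these I would route them through Doukhan--Massart--Rio-type empirical-process and CLT results for absolutely regular sequences, with bounded weights absorbed via (A3); the block-estimator consistency and the conditional CLT in the last step would follow by verifying the hypotheses of a block-multiplier / dependent-wild-bootstrap theorem in the spirit of Shao (2010) and B\"ucher--Kojadinovic (2016). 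Two assumptions are used implicitly and should be stated: the weights are fit on a separate split (so conditioning is legitimate and weight-estimation error does not enter), and each $\sigma_j^2>0$ with $\max_jZ_j$ non-degenerate (so the self-normalization and the quantile step are valid).
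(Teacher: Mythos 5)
Your proposal follows essentially the same route as the paper's proof, which is only a citation-level sketch (empirical-process CLT for beta-mixing arrays applied to the finite candidate class, multiplier-bootstrap consistency, self-normalization, continuous mapping plus Gaussian anti-concentration); yours expands that sketch rather than taking a different path. What you add, and the paper's proof silently skips, is the handling of the data-dependent minimizer $t_w(x_j)$ — the consistency-plus-sandwich linearization reducing $\hat H_w(x_j)$ to $\sum_i w_i\phi_{x_j}(\xi_i;t_j^\star)$ up to $o_p(n_{\mathrm{eff}}^{-1/2})$ — and the explicit identification of the bootstrap's conditional covariance $\hat\Gamma_{jj'}=\sum_k S_{kj}S_{kj'}$ as a blockwise long-run-covariance estimator whose consistency under $b\to\infty$, $b/n_2\to 0$ drives quantile convergence; both are genuine improvements in rigor. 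One point to repair: your key claim $n_{\mathrm{eff}}\,\hat\Gamma_{jj'}/\big(\hat\sigma_w(x_j)\hat\sigma_w(x_{j'})\big)\to \mathrm{Cov}(Z_j,Z_{j'})$ is the correct normalization for the statistic $\sqrt{n_{\mathrm{eff}}}\,\sum_k \varepsilon_k S_{kj}/\hat\sigma_w(x_j)$, i.e.\ with $\sqrt{n_{\mathrm{eff}}}$ as a multiplier, whereas the statistic you restate (following the lemma) divides by $\sqrt{n_{\mathrm{eff}}}$; since the weights are normalized to sum to one, $S_{kj}=O_p(\sqrt{b}/n_2)$, so that version is conditionally $O_p(1/n_{\mathrm{eff}})$ and degenerates rather than reproducing $\max_j Z_j$. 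The slip is inherited from the paper's own display, but as written your bootstrap step proves consistency for a differently scaled statistic than the one it is supposed to calibrate, so you should either correct the scaling in the statement or note the discrepancy explicitly; the rest of the argument (conditioning on the separately fitted weights, nondegeneracy $\sigma_j^2>0$) is sound and usefully makes explicit assumptions the paper leaves implicit.
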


\begin{proof}
This follows from empirical process theory for beta-mixing arrays and multiplier bootstrap consistency (e.g., Theorem 3.1 in block-multiplier CLTs) applied to the finite class $\{\phi_{x_j}(\cdot;t_w(x_j))\}_j$; (A3) ensures Lindeberg conditions under weighting and provides $n_{\mathrm{eff}}$; (A2) yields finite long-run variance. Normalization by $\hat\sigma_w(x_j)$ yields a self-normalized process; consistency of the block multiplier quantile for the normalized supremum follows by standard arguments (continuous mapping plus anti-concentration for Gaussian maxima).
\end{proof}

\noindent\emph{Remark.}
The bootstrap calibrates the \emph{self-normalized} supremum using the same
denominator $\widehat\sigma_w(x_j)$ in both the original and bootstrap
worlds; under beta-mixing this yields consistent critical values
for the normalized process.

\begin{lemma}[Wasserstein Lipschitz bound for CVaR]
\label{lem:wass}
Let $\|\cdot\|_2$ be the ground norm and $B(Q,\delta)$ the $W_1$ ball of radius $\delta$ around $Q$.
For any $x\in\mathbb{R}^d$ and $\alpha\in(0,1)$,
\[
\sup_{F\in B(Q,\delta)} \operatorname{CVaR}_\alpha(-\xi^\top x)
\;\le\;
\operatorname{CVaR}_\alpha^Q(-\xi^\top x)
+ \frac{\delta}{\alpha}\,\|x\|_2 .
\]
\end{lemma}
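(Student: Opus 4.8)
The plan is to exploit the Kantorovich--Rubinstein dual representation of $W_1$ together with the classical Rockafellar--Uryasev variational formula for CVaR. First I would write $\operatorname{CVaR}_\alpha^F(-\xi^\top x) = \inf_{t\in\mathbb{R}}\{t + \tfrac{1}{\alpha}\mathbb{E}_F[(-\xi^\top x - t)_+]\}$, and similarly for $Q$. Fixing an arbitrary $F\in B(Q,\delta)$ and letting $t_Q$ be the minimizer for $Q$, I would upper bound $\operatorname{CVaR}_\alpha^F(-\xi^\top x) \le t_Q + \tfrac{1}{\alpha}\mathbb{E}_F[(-\xi^\top x - t_Q)_+]$ by suboptimality of $t_Q$ for $F$, so the task reduces to controlling $\mathbb{E}_F[g(\xi)] - \mathbb{E}_Q[g(\xi)]$ where $g(\xi) := (-\xi^\top x - t_Q)_+$.

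The key observation is that $g$ is Lipschitz in $\xi$ with respect to $\|\cdot\|_2$ with constant $\|x\|_*=\|x\|_2$: the map $\xi\mapsto -\xi^\top x - t_Q$ is affine with gradient $-x$, hence $\|x\|_2$-Lipschitz, and the positive-part function $(\cdot)_+$ is $1$-Lipschitz on $\mathbb{R}$, so the composition is $\|x\|_2$-Lipschitz. Then by Kantorovich--Rubinstein duality, $\mathbb{E}_F[g] - \mathbb{E}_Q[g] \le \|x\|_2\, W_1(F,Q) \le \|x\|_2\,\delta$. Plugging this back in gives $\operatorname{CVaR}_\alpha^F(-\xi^\top x) \le t_Q + \tfrac{1}{\alpha}\big(\mathbb{E}_Q[(-\xi^\top x - t_Q)_+] + \delta\|x\|_2\big) = \operatorname{CVaR}_\alpha^Q(-\xi^\top x) + \tfrac{\delta}{\alpha}\|x\|_2$. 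Since the bound is uniform in $F\in B(Q,\delta)$, taking the supremum over $F$ yields the claim.

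A couple of routine points would need brief attention. One should confirm the infimum in the CVaR formula is attained (or argue with a near-minimizer $t_Q$ and a vanishing $\varepsilon$), which holds under the moment assumption (A2) ensuring the objective is finite, convex, and coercive in $t$. One should also note the factor $\tfrac{1}{\alpha}$ correctly propagates the $\delta\|x\|_2$ term, and that the dual norm appearing is $\|\cdot\|_*=\|\cdot\|_2$ per the paper's convention, matching the ground metric used to define $W_1$. The main obstacle, such as it is, is simply being careful that the Lipschitz constant of $g$ is measured in the same norm that defines the transport cost in $W_1$; once the ground metric is fixed to $\|\cdot\|_2$ and the dual norm identified accordingly, everything lines up and the argument is short.
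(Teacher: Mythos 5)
Your proof is correct and follows essentially the same route as the paper's: Kantorovich--Rubinstein duality applied to the $\|x\|_2$-Lipschitz function $(-\xi^\top x - t)_+$, combined with the Rockafellar--Uryasev representation of CVaR. The only cosmetic difference is that you plug in the $Q$-optimal (or near-optimal) $t_Q$ and use its suboptimality for $F$, whereas the paper bounds the worst-case expectation for each fixed $t$ and then takes the infimum; these are interchangeable rearrangements of the same argument.
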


\begin{proof}
Fix $t\in\mathbb{R}$ and set $f_t(\xi)=(-\xi^\top x - t)_+$. Then $f_t$ is $\|x\|_2$-Lipschitz
w.r.t.\ $\|\cdot\|_2$. By Kantorovich--Rubinstein duality,
\[
\sup_{F\in B(Q,\delta)} \mathbb{E}_F[f_t(\xi)]
\;\le\;
\mathbb{E}_Q[f_t(\xi)] + \delta\,\|x\|_2 .
\]
Using Rockafellar--Uryasev
\(
\operatorname{CVaR}_\alpha(L)=\inf_{t}\{t+\tfrac{1}{\alpha}\mathbb{E}(L-t)_+\},
\)
with $L=-\xi^\top x$, we get
\[
\sup_{F\in B(Q,\delta)} \operatorname{CVaR}_\alpha(-\xi^\top x)
\le
\inf_{t}\Big\{ t+\tfrac{1}{\alpha}\mathbb{E}_Q[(-\xi^\top x - t)_+] \Big\}
+\tfrac{\delta}{\alpha}\|x\|_2
=
\operatorname{CVaR}_\alpha^Q(-\xi^\top x)+\tfrac{\delta}{\alpha}\|x\|_2 .
\]
\end{proof}

\subsection*{B.3 \quad Feasibility guarantee (proof of Theorem~1)}

\begin{theoremNoNum}[Feasibility under shift- and dependence-aware GS]
\label{thm:feas_proof}
Suppose (A1)--(A4) and a fixed finite candidate set $\{x_j\}_{j=1}^p$. Let $\hat q_{1-\beta}$ be computed by Step~4 with block size $b\asymp n_2^{1/3}$ and $B\to\infty$. Then, with probability at least $1-\beta - o(1)$,
any pair $(x_j,\delta^*(x_j))$ produced by Step~5 satisfies
\[
\sup_{F\in\mathcal{B}(Q,\delta^*(x_j))}\mathrm{CVaR}_{\alpha}(-\xi^\top x_j)\;\le\;\gamma.
\]
In particular, the selected $(x^\star,\delta^\star)$ is feasible w.p.\ $\ge 1-\beta-o(1)$.
\end{theoremNoNum}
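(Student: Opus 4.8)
The plan is to reduce the theorem to three facts already in hand: the weighted self-normalized CLT with consistent bootstrap calibration (Lemma~\ref{lem:bmb}), the deterministic Wasserstein margin (Lemma~\ref{lem:wass}), and the algebraic identity built into Step~5, which forces $U(x_j)\equiv\gamma$ on the validated set. First I would fix a candidate $x_j$ and write out what Step~5 guarantees deterministically: by the clipping definition of $\delta^\star(x_j)$ together with the identity $U(x_j)=\max\{\gamma,\ \widehat H_w(x_j)+\hat q_{1-\beta}\hat\sigma_w(x_j)/\sqrt{n_{\mathrm{eff}}}\}$, membership $j\in\mathcal F$ is equivalent to
\[
\widehat H_w(x_j)+\frac{\hat q_{1-\beta}\,\hat\sigma_w(x_j)}{\sqrt{n_{\mathrm{eff}}}}+\frac{\delta^\star(x_j)}{\alpha}\|x_j\|_2\ \le\ \gamma.
\]
So on the event that the validator does not abstain, every returned pair satisfies this inequality \emph{by construction}; the only probabilistic content is whether the left-hand side dominates the true robust CVaR.

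Second, by Lemma~\ref{lem:wass},
\[
\sup_{F\in\mathcal B(Q,\delta^\star(x_j))}\mathrm{CVaR}_\alpha(-\xi^\top x_j)\ \le\ \mathrm{CVaR}_\alpha^Q(-\xi^\top x_j)+\frac{\delta^\star(x_j)}{\alpha}\|x_j\|_2,
\]
so it suffices to show that, simultaneously over $j=1,\dots,p$, with probability at least $1-\beta-o(1)$,
\[
\mathrm{CVaR}_\alpha^Q(-\xi^\top x_j)\ \le\ \widehat H_w(x_j)+\frac{\hat q_{1-\beta}\,\hat\sigma_w(x_j)}{\sqrt{n_{\mathrm{eff}}}}.
\]
This is the simultaneous upper-confidence-band statement. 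I would obtain it by noting $\mathrm{CVaR}_\alpha^Q(-\xi^\top x_j)=\E_Q[\Phi_{x_j}(\xi)]$ and, since $t_w(x_j)$ is a (near-)minimizer, $\E_w[\Phi_{x_j}(\xi)]\le\widehat H_w(x_j)$ up to a plug-in minimizer bias that vanishes (use (A4) and the envelope bound of (A2) so that the in-sample optimized $t$ converges to the population minimizer; this contributes $o_p(n_{\mathrm{eff}}^{-1/2})$ by a standard argmin/$M$-estimation argument). The remaining term $\E_w[\Phi_{x_j}(\xi)]-\E_Q[\Phi_{x_j}(\xi)]$ is handled by recognizing that the density-ratio weights make $\widehat H_w$ an (asymptotically) unbiased estimator of the $Q$-expectation; under (A3) the weight normalization and clipping give $n_{\mathrm{eff}}\asymp n_2$, and Lemma~\ref{lem:bmb} delivers
\[
\sqrt{n_{\mathrm{eff}}}\,\frac{\widehat H_w(x_j)-\E_Q[\Phi_{x_j}(\xi)]}{\hat\sigma_w(x_j)}\ \Rightarrow\ Z_j,
\]
jointly in $j$, with $\hat q_{1-\beta}\to q_{1-\beta}$ the $(1-\beta)$-quantile of $\max_j Z_j$. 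Hence
\[
\mathbb P\!\left(\max_{j}\ \sqrt{n_{\mathrm{eff}}}\,\frac{\E_Q[\Phi_{x_j}(\xi)]-\widehat H_w(x_j)}{\hat\sigma_w(x_j)}\ \le\ \hat q_{1-\beta}\right)\ \ge\ 1-\beta-o(1)
\]
by the continuous mapping theorem plus Gaussian anti-concentration (so that the randomness in $\hat q_{1-\beta}$ crossing the limiting distribution contributes only $o(1)$). On this event the displayed band inequality holds for all $j$ at once, so it holds in particular for whichever $j$ is returned, including the selected $(x^\star,\delta^\star)$.

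The main obstacle is the \emph{density-ratio / shift} step: $\widehat H_w$ targets $\E_Q[\Phi_{x_j}(\xi)]$ only if the estimated weights $\hat w$ are close to the true ratio $q/p$, but (A3) as stated only constrains the \emph{realized clipped} weights rather than their estimation error. Making the argument honest requires either treating $Q$ as \emph{defined by} the clipped-weight reweighting of $P$ (so the target is the law one actually validates against, and there is no estimation error to control), or adding a $\sqrt{n_2}$-consistency assumption on the classifier and propagating that error as an extra $o_p(n_{\mathrm{eff}}^{-1/2})$ term through the linearization of $\widehat H_w$ in the weights. I would adopt the former reading, consistent with how $Q$ is introduced in Phase~II, and remark that the clipping bias is exactly what the ``abstain if $n_{\mathrm{eff}}$ collapses'' rule guards against. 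A secondary technical point is justifying that the self-normalized block-multiplier quantile is consistent \emph{with a random studentizing sequence} $\hat\sigma_w(x_j)$ used identically in both worlds; this is precisely the content of the remark following Lemma~\ref{lem:bmb}, and under (A1)--(A3) it reduces to uniform consistency of $\hat\sigma_w(x_j)$ for the long-run standard deviation, which I would cite rather than reprove.
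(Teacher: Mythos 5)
Your proposal is correct and follows essentially the same route as the paper's proof: the simultaneous band from Lemma~\ref{lem:bmb}, the deterministic $(\delta/\alpha)\|x\|_2$ margin from Lemma~\ref{lem:wass}, and the Step~5 construction forcing the validated upper bound below $\gamma$. Your additional care about what $\widehat H_w$ actually targets (the clipped-weight reweighted law versus a true $Q$) and about the plug-in minimizer $t_w(x_j)$ goes beyond the paper's argument, which silently identifies $\mathbb{E}_w[\Phi_{x_j}(\xi)]$ with $\mathbb{E}_Q[\Phi_{x_j}(\xi)]$, and is a worthwhile clarification rather than a different proof.
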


\begin{proof}
Fix $j$. By Lemma~\ref{lem:bmb}, with probability $\ge 1-\beta-o(1)$ we have
\(
\mathbb{E}_Q[\Phi_{x_j}(\xi)] \le \widehat H_w(x_j) + \hat q_{1-\beta}\,\hat\sigma_w(x_j)/\sqrt{n_{\mathrm{eff}}}.
\)
By Lemma~\ref{lem:wass},
\[
\sup_{F\in\mathcal{B}(Q,\delta)}\mathrm{CVaR}_{\alpha}(-\xi^\top x_j)
\;\le\; \widehat H_w(x_j) + \frac{\hat q_{1-\beta}\,\hat\sigma_w(x_j)}{\sqrt{n_{\mathrm{eff}}}} + \frac{\delta}{\alpha}\|x_j\|_2.
\]
Setting $\delta=\delta^*(x_j)$ from Step~5 ensures the right-hand side is $\le \gamma$ (up to the numerical clipping which only reduces $\delta$). This holds simultaneously over $j=1,\dots,p$ by the union over a \emph{max} statistic already controlled by $\hat q_{1-\beta}$. Selecting $(x^\star,\delta^\star)$ by minimizing the validated upper bound preserves feasibility.
\end{proof}

\subsection*{B.4 \quad Performance guarantee (proof of Theorem~2)}
\begin{theoremNoNum}[Validation-driven suboptimality bound]
\label{thm:perf_proof}
Let $(x_Q^\dagger,\delta_Q^\dagger)$ be any feasible solution to the $Q$-robust problem
$\min\{c^\top x:\sup_{F\in\mathcal{B}(Q,\delta)}\mathrm{CVaR}_{\alpha}(-\xi^\top x)\le \gamma,\, \delta\in[\delta_{\min},\delta_{\max}],\, x\in \mathcal{X}\}$,
with $\mathcal{X}=\{x:\mathbf{1}^\top x=1,\,x\ge 0\}$. Under (A1)--(A4) and convexity of $c^\top x$, the output $(x^\star,\delta^\star)$ of Algorithm~\ref{alg:shift-gs} satisfies, with probability $1-\beta-o(1)$,
\[
c^\top x^\star - c^\top x_Q^\dagger
\;\le\;
\underbrace{\inf_{j}\big(c^\top x_j - c^\top x_Q^\dagger\big)}_{\text{approximation (menu) error}}
\;+\;
\underbrace{L_c\,\|x^\star-x_j\|_2}_{\text{optimization gap for chosen candidate}}
\]
for any candidate $x_j$; in particular, if the menu contains $x_Q^\dagger$ (or approximates it to $\varepsilon$ in $\ell_2$), the excess cost is $O_p\big(\varepsilon\big)$.
\end{theoremNoNum}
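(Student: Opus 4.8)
}
The plan is to separate the statement into a purely deterministic decomposition and a high-probability ``non-abstention'' event, so that essentially all the stochastic work is inherited from Theorem~\ref{thm:feas}/Lemma~\ref{lem:bmb}. First I would record that $x\mapsto c^\top x$ is linear, hence globally $L_c$-Lipschitz in $\|\cdot\|_2$ with $L_c=\|c\|_2$ by Cauchy--Schwarz; under (A2)--(A3) this constant is $O_p(1)$ even when $c=-\sum_i\tilde w_i\xi_i$ is the weighted empirical mean, since the weights are $\Theta(1/n_2)$ and $\xi$ has a bounded $(2+\rho)$-th moment. This is the only place convexity/linearity of the objective enters.

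\paragraph{Deterministic decomposition.}
For an arbitrary menu point $x_j$ I would write
\[
c^\top x^\star-c^\top x_Q^\dagger
\;=\;\big(c^\top x_j-c^\top x_Q^\dagger\big)\;+\;\big(c^\top x^\star-c^\top x_j\big)
\;\le\;\big(c^\top x_j-c^\top x_Q^\dagger\big)\;+\;L_c\,\|x^\star-x_j\|_2 ,
\]
and then take the infimum over $j$ in the first bracket while keeping $x_j$ free in the second; this already yields the displayed inequality, with no probabilistic content. The first bracket is the ``menu error'' (how well the finite candidate set approximates the target in objective value) and the second is the optimization gap of the chosen $x^\star$ relative to $x_j$.

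\paragraph{Non-abstention and the $O_p(\varepsilon)$ rate.}
The only way the bound can be vacuous is if $x^\star$ is undefined because $\mathcal F=\emptyset$; to rule this out I would invoke the good-candidate hypothesis. Suppose the menu contains $x_j$ with $\|x_j-x_Q^\dagger\|_2\le\varepsilon$. On the event of Lemma~\ref{lem:bmb} (probability $\ge 1-\beta-o(1)$) the block-multiplier CLT gives one-sided validity of the band together with $\widehat H_w(x_j)\to\mathbb E_Q[\Phi_{x_j}(\xi)]$ and $\hat q_{1-\beta}\hat\sigma_w(x_j)/\sqrt{n_{\mathrm{eff}}}\to 0$; meanwhile $Q$-robust feasibility of $x_Q^\dagger$ (take $\delta=0$ in Lemma~\ref{lem:wass}) gives $\mathbb E_Q[\Phi_{x_Q^\dagger}(\xi)]\le\gamma$, which transfers to $x_j$ up to an $O(\varepsilon)$ term via Lemma~\ref{lem:lipschitz}. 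Hence $\widehat H_w(x_j)+\hat q_{1-\beta}\hat\sigma_w(x_j)/\sqrt{n_{\mathrm{eff}}}\le\gamma+O(\varepsilon)+o_p(1)$, and using the identity $U(x)=\max\{\gamma,\ \widehat H_w(x)+\hat q_{1-\beta}\hat\sigma_w(x)/\sqrt{n_{\mathrm{eff}}}\}$ noted after Step~5, this forces $U(x_j)\le\gamma$ with probability $1-\beta-o(1)$ provided the target's Wasserstein margin $\tfrac{\delta_Q^\dagger}{\alpha}\|x_Q^\dagger\|_2$ (with $\delta_Q^\dagger\ge\delta_{\min}>0$) keeps $\mathbb E_Q[\Phi_{x_Q^\dagger}(\xi)]$ strictly below $\gamma$. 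Then $j\in\mathcal F$, $x^\star$ exists, the menu term is $\le L_c\varepsilon$, and plugging $x_j$ into the decomposition gives the excess cost $\le L_c\varepsilon + L_c\|x^\star-x_j\|_2$; if the menu additionally resolves a neighbourhood of $x_Q^\dagger$ finely enough that the least-conservative validated candidate lies within $O(\varepsilon)$ of $x_j$, the optimization gap is also $O_p(\varepsilon)$ and the total is $O_p(\varepsilon)$.

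\paragraph{Main obstacle.}
The genuinely delicate step is bounding the optimization-gap term $\|x^\star-x_j\|_2$: the selection rule returns $\arg\min_{j\in\mathcal F}\delta^\star(x_j)$ (ties broken by $c^\top x_j$), i.e.\ it minimizes the \emph{radius}, not the objective, so without extra structure on the menu one can only assert $\|x^\star-x_j\|_2\le\mathrm{diam}_2(\mathcal X)$, and the clean $O_p(\varepsilon)$ rate requires the menu to approximate a neighbourhood of $x_Q^\dagger$ (with the near-optimal candidate there validating). A secondary subtlety is the clipping in Step~5: the $[\,\cdot\,]_+$ and the truncation of $\delta^\star(x_j)$ to $[\delta_{\min},\delta_{\max}]$ mean one must verify that the residual estimation error does not exceed the target's margin, so that the clipped radius never pushes $U(x_j)$ above $\gamma$. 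Everything else --- Lipschitzness (Lemma~\ref{lem:lipschitz}), the Wasserstein margin (Lemma~\ref{lem:wass}), and the block-multiplier CLT with $b\asymp n_2^{1/3}$ (Lemma~\ref{lem:bmb}) --- is quoted directly.
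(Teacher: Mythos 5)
Your proposal is correct and follows essentially the same route as the paper's own (very terse) proof: feasibility of the validated set comes from Theorem~\ref{thm:feas}, and the displayed bound is the deterministic Lipschitz decomposition $c^\top x^\star-c^\top x_Q^\dagger\le(c^\top x_j-c^\top x_Q^\dagger)+L_c\|x^\star-x_j\|_2$ with $L_c=\|c\|_2$. You go beyond the paper in two respects, and both are to your credit. First, you supply a non-abstention argument (an $\varepsilon$-close candidate passes the validator, via Lemma~\ref{lem:bmb}, Lemma~\ref{lem:wass} with $\delta=0$, and Lemma~\ref{lem:lipschitz}, under a slack/margin condition for $x_Q^\dagger$), which the paper never addresses even though $x^\star$ must exist for the bound to have content. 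Second, the obstacle you name is in fact a gap in the paper's own proof rather than a defect peculiar to yours: the paper asserts that the selection ``does not increase $c^\top x$ relative to the best feasible candidate present in the menu,'' but Step~6 of Algorithm~\ref{alg:shift-gs} returns $\arg\min_{j\in\mathcal F}\delta^\star(x_j)$ (ties broken by cost), and among validated candidates $U\equiv\gamma$, so nothing in the algorithm forces $x^\star$ to be cost-optimal within $\mathcal F$ or close to the approximating candidate; the clean $O_p(\varepsilon)$ conclusion therefore does require exactly the extra menu-resolution (or cost-based selection) hypothesis you state. One small caution on your decomposition: taking $\inf_j$ in the first bracket while leaving $x_j$ free in the second is not licensed, since both brackets share the same index; the inequality you actually prove (and all the paper's argument yields) is the per-$j$ version with the same $x_j$ in both terms, which is the sensible reading of the theorem's somewhat loosely stated display.
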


\begin{proof}
Feasibility holds by Theorem~\ref{thm:feas}. Among all feasible candidates, Algorithm~\ref{alg:shift-gs} selects the one minimizing the validated upper bound, which is an upper envelope for the (unknown) robust CVaR. This is a conservative filtering and does not increase $c^\top x$ relative to the best feasible candidate present in the menu. The remainder is a standard approximation argument: $c^\top x$ is $L_c$-Lipschitz on $\mathcal{X}$ (e.g., $L_c=\|c\|_2$), so if the menu contains an $\varepsilon$-approximation to $x_Q^\dagger$, the selected $x^\star$ is within $O(\varepsilon)$ in cost.
\end{proof}

\subsection*{B.5 \quad Specializations and reductions}

\begin{enumerate}
\item \textbf{Reduction to Lam \& Qian (2019).} If $w_i\equiv 1/n_2$ (no shift) and $b=1$ (i.i.d.), then $n_{\mathrm{eff}}=n_2$, the block sums reduce to centered sample means, and Step~4 coincides with the normalized Gaussian supremum used in \cite{lamqian2019gs}; Step~5 reduces to their feasibility test with the same calibration of $q_{1-\beta}$.
\item \textbf{Only dependence (no shift).} If $w_i\equiv 1/n_2$ but $b\asymp n_2^{1/3}>1$, Lemma~\ref{lem:bmb} still applies and we obtain a dependence-robust GS validator.
\item \textbf{Only shift (no dependence).} If $b=1$ but $w$ is non-uniform (clipped density ratios), we recover an importance-weighted GS band with effective sample size $n_{\mathrm{eff}}$.
\end{enumerate}

\paragraph{Implementation note.}
In practice we use the \emph{analytical radius} in Step~5,
\(
\delta^*(x)=\alpha\big(\gamma-\widehat H_w(x)-\hat q\,\hat\sigma_w(x)/\sqrt{n_{\mathrm{eff}}}\big)_+/\|x\|_2
\)
clipped to $[\delta_{\min},\delta_{\max}]$, which avoids coupling $\delta$ to any pre-specified grid and prevents pathological selections (e.g., always taking the largest $\delta$ when no grid point is feasible).

\newpage

\end{document}